\newtheorem{rmk}[thm]{Remark}
\numberwithin{thm}{subsection}
\def\intprod{\mathbin{\hbox to 6pt{%
                 \vrule height0.4pt width5pt depth0pt
                 \kern-.4pt
                 \vrule height6pt width0.4pt depth0pt\hss}}}
\newcommand{\ns}{\mspace{-1.5mu}}             
\newcommand{\ps}{\mspace{1.5mu}}              
\definecolor{gray}{rgb}{0.5, 0.5, 0.5}
\definecolor{gold}{rgb}{0.255, 0.215, 0}
\definecolor{cyan}{rgb}{0.0, 0.2, 0.2}
\definecolor{pink}{rgb}{0.2, 0.1, 0.2}
\begin{document}

\let\Box\blacksquare
\newpage
\thispagestyle{empty}

\title
{\huge \bf Covariantizing Classical Field Theories
\\[1.5ex]}
\author{%
{\bf Marco Castrill\'on L\'opez} \\
Departamento de Geometr\' ia y Topolog\' ia\\[-2pt]
Facultad de Ciencias Matem\' aticas \\[-2pt]
Universidad Complutense de Madrid \\[-2pt]
28040 Madrid, Spain
\\
\and
{\bf Mark J. Gotay} \\
Pacific Institute for the Mathematical Sciences\\[-2pt]
University of British Columbia\\[-2pt]
Vancouver, BC  V6T 1Z2 Canada \\
}
\date{\today}

\thispagestyle{empty}

\maketitle

\begin{abstract}{\footnotesize
We show how to enlarge the covariance group of any classical field theory in such a way that the resulting ``covariantized'' theory is `essentially equivalent' to the original. In particular, our technique will render any classical field theory generally covariant, that is, the covariantized theory will be spacetime diffeomorphism-covariant and free of absolute objects. 
Our results thus generalize the well-known parametrization technique of Dirac and Kucha\v r. Our constructions apply equally well to internal covariance groups, in which context they produce natural derivations of both the Utiyama minimal coupling and St\"uckelberg tricks. }
\end{abstract}

\newpage


\addtocontents{toc}{\protect\vspace{5ex}}



\section{Introduction}

The Principle of General Covariance has had a long (and somewhat checkered) history in physics (\cite{No1993}). In fact, this principle is often taken to mean slightly different things and it is unclear what the physical ramifications of various formulations are. \cite{Anderson1967} contains a careful exposition of the issues surrounding this idea.

Here, motivated by \cite{Anderson1967} and \cite{Kuchar1988}, we adopt the following precise definition: A Lagrangian field theory is {\bfi generally covariant} provided (\emph{i}) the Lagrangian density is equivariant with respect to the action of spacetime diffeomorphisms, and (\emph{ii}) 
all fields are variational. General covariance is of course a desirable property for a physical theory to possess, and it is quite useful when investigating classical field theories from both  structural and computational points of view (\cite{GoMa2010}, \cite{Leok2004}). 

Unfortunately, there are various ways in which general covariance can fail to hold. The most obvious is that the Lagrangian density $\mathcal L$ just does not have the required behavior under changes of coordinates; for instance, for the 1 + 1 Klein--Gordon equation
\begin{equation*}
\phi_{,xt} +  \frac{1}{2}m^2 \phi = 0
\end{equation*}
 with mass $m$ on $\mathbb R^2$, we have 
\begin{equation}
\label{KGL1}
\mathcal L = \left(\phi_{,t} \phi_{,x} - \frac{1}{2}m^2\phi^2 \ns  \right)\! dt \wedge dx
\end{equation}
which is visibly coordinate-dependent.

 A straightforward way to repair \eqref{KGL1} is to rewrite  it as
\begin{equation}
{\mathcal L} = \frac12 \left(g^{\mu\nu}\phi_{,\mu}\phi_{,\nu} - m^2 \phi^2\right) \ns \sqrt{-\det g} \, d^{\ps 2}\ns x
\label{KGL2}
\end{equation}
where $g$ is the Minkowski metric. Now we have $\textup{Diff}(\mathbb R^2)$-equivariance, but at the price of introducing the (nonvariational) field variable $g$.\footnote{\ Introducing a metric is not the only way of making \eqref{KGL1} geometric; we shall give another one later.}

This brings us to another way that a theory may fail to be generally covariant: it may contain an ``absolute''  or ``background'' object in the sense of \cite{Anderson1967} (see also \cite{Post2007}), in which case (\emph{ii}) cannot be satisfied. In \eqref{KGL2} $g$ is an absolute object (``prior geometry,'' in the terminology of \S17.6 of \cite{MiThWh1973}); as long as $g$ is immutable, the theory is still not generally covariant, despite the fact that the Klein--Gordon equation, which now takes the form $g^{\mu\nu}\phi_{;\mu\nu} + m^2 \phi = 0,$ is certainly an invariant statement.\footnote{\ Thus the Principle of General Covariance does not simply mean that ``the laws of physics must be expressible as geometric relationships between geometric objects,'' a point emphasized by \cite{MiThWh1973}, \S 12.5.}

However, if we insist that $g$ be dynamic, then varying it in \eqref{KGL2} yields 
\begin{equation}
\frac{\delta \mathcal L}{\delta g_{\mu\nu}} = 0 \quad \Longleftrightarrow \quad \mathfrak T^{\mu\nu} = 0,
\label{EMSEM}
\end{equation}
where $\mathfrak T$ is the stress-energy-momentum tensor density of the Klein--Gordon field. This 
forces $\phi = 0$, so that we have produced a system which is not equivalent to that with which we started. To undo \emph{this} problem, we must modify the Lagrangian once again, possibly by adding in a source term for the metric $g$ (e.g., regard $g$ as gravity). But in doing so we would likely still obtain an inequivalent (albeit consistent) system. Failing this we cannot allow $g$ to be variational.

So even in this basic example, we have two starting points based on: \eqref{KGL1}, which is not spacetime diffeomorphism covariant but in which all fields are variational, and \eqref{KGL2}, which is the other way around. Thus goals (\emph{i}) and (\emph{ii}) are more difficult to accomplish---simultaneously---than might be expected.\footnote{\  \cite{Kr1917} showed that it is always possible to make any theory spacetime diffeomorphism covariant by introducing auxiliary variables. As stressed by \citeauthor{Dirac1964} [\citeyear{Dirac1951},\citeyear{Dirac1964}] and \cite{Kuchar1988}, the catch is to do this while keeping all fields variational, without materially changing the theory.} And clearly we need to be more subtle than simply ``turning background fields on.''

There are other contexts in which similar issues arise. We may have a field theory which does not have a desired covariance property (not necessarily related to spacetime diffeomorphisms); can we modify it in such as way as to attain covariance while maintaining equivalence with the original system? A well-known  example of what we have in mind is provided by gauge theory. In this case we have a Lagrangian which is invariant under the action of a finite-dimensional Lie group, and we would like to enlarge the invariance group to the full gauge group. This can be accomplished by means of the Utiyama minimal coupling trick which, however, necessitates introducing  a (nonvariational)  connection. 

In this paper we show how to solve these and similar problems by systematically introducing new 
\emph{dynamic} fields into the formalism; basically, subject to a few mild assumptions,
we prove that  \emph{we can enlarge the covariance groups of field theories as desired while leaving solution spaces essentially unchanged.}\footnote {\, In a precise sense to be elucidated below.} In particular, our constructions provide a prescription for making  \emph{any} field theory generally covariant, as well as lead to natural derivations of both the St\"uckelberg Lagrangian and the Utiyama minimal coupling trick (conveniently understood) as special cases.

\section{Setup}

As usual in classical field theory we start with a configuration bundle $Y \stackrel{\pi_{XY}}{\longrightarrow}  X$ whose sections, denoted $\phi$, are the fields under consideration. The dimension of $X$ is taken to be $n + 1$, and we suppose that $X$ is oriented.\footnote{\ Often $X$ is a genuine spacetime, but in principle it need not have a Lorentzian structure (e.g., topological field theories).} Let
\[
\mathcal{L} : J^r Y \to \Lambda^{n+1} X
\]
be an $r^{\rm th}$-order Lagrangian density, where $J^r Y$ is the $r^{\rm th}$ jet bundle of $Y $ and $\Lambda^{n+1} X$ is the space of top forms on $X$. Loosely following the notation of 
\cite{CLGoMa2008}, we use $ \left( x^\mu, y^A, y^A{}_{\mu_1}, \ldots y^A{}_{\mu_1\ldots \mu_r} \right)$ as coordinates on $ J^1 Y$. In these coordinates  we write
$$\mathcal{L} = L\! \left( x^\mu, y^A, y^A{}_{ \mu_1}, \ldots y^A{}_{\mu_1\ldots \mu_r}  \right) \! d^{\ps n+1} \ns x.$$
Evaluated on the ($r^{\rm th}$) jet prolongation of a section $\phi$, the Lagrangian density becomes a function of $(x^\mu, \phi^A ,  \phi^A{}_{, \mu_1}, \ldots \phi^A{}_{, \mu_1\ldots \mu_r} )$, where
$$  \phi^A := y^A \circ \phi \quad {\rm and} \quad \phi^A{}_{,  \mu_1\ldots \mu_s} := y^A{}_{ \mu_1\ldots \mu_s} \circ j^s\ns \phi = \frac{\partial^{\ps s}\ns\phi^A}{\partial x^ {\mu_1\ldots \mu_s}}\ps $$
for $1 \leq s \leq r$; we shall abbreviate this when convenient and simply write $\mathcal{L} ( j^r\ns\phi )$. 

Suppose we have a group $\mathcal{G}$ which acts on $Y$ by
bundle automorphisms, so that we have a  monomorphism $\mathcal{G} \to
\operatorname{Aut}(Y)$.  (We will often blur the distinction between $\mathcal{G}$
and its  image in
$\operatorname{Aut}(Y)$.)  Being an automorphism of $Y \to X$, each $\sigma_Y \in \mathcal G$ covers a diffeomorphism of $X$ which we denote by $\sigma_X$.

We say that a Lagrangian field theory  is $\mathcal{G}$-{\bfi co\-var\-i\-ant} if the
Lagrangian density $\mathcal{L}$ is  equivariant with respect to the induced
actions of $\mathcal{G}$ on $J^rY$ by prolongation and $\Lambda^{n+1}X$ by push-forward:  
for all $\sigma_Y \in \mathcal{G}$ and $\gamma \in J^r Y$, 
\begin{equation}
\mathcal{L} \ns \left( j^r\ns \sigma_Y(\gamma) \right) = 
\sigma_{X*} \big[\mathcal{L}(\gamma)\big].
\label{equi} 
\end{equation}

A key fact is that a covariance group acts by {\bfi symmetries}: it maps solutions of the Euler--Lagrange equations to solutions. This is established in \S4.2 of \cite{Olver1986}.

\section{General Covariance}

General covariance is concerned with the action of the diffeomorphism group of the base $X$. Since ${\rm Diff}(X)$ is not necessarily a subgroup of ${\rm Aut}(Y)$, we must assume that there exists a group embedding ${\rm Diff}(X) \to {\rm Aut}(Y)$ covering the identity
denoted $\sigma \mapsto \sigma_Y$. (Such embeddings typically exist. For instance, for tensor theories, such an embedding is given by push-forward. See \cite{KoMiSl1993} and \cite{Ja2009} for more information.) 
 We suppose that the fields $\phi$ have differential index $k$,\footnote{ The ``differential index'' of a field is the highest order derivative that appears in its transformation law under diffeomorphisms of the base $X$, cf. \cite{GoMa1992}.} so that $\sigma_Y$ depends at most on the $k^{\rm th}$-order derivatives of $\sigma_X$. (Again, this is usually the case. Tensor fields in particular have index $\leq 1$.)

If \eqref{equi} holds for all $\sigma \in {\rm Diff}(X)$, then we say that the field theory is {\bfi diffeomorphism covariant}.

\subsection{The Variational Case}

We begin by tackling the case when a given classical field  theory is not diffeomorphism covariant at the outset, while assuming that all fields originally present are variational.
By combining the ideas of \citeauthor{Dirac1964} [\citeyear{Dirac1951},\citeyear{Dirac1964}], as further developed in \citeauthor{Kuchar1973} [\citeyear{Kuchar1973},\citeyear{Kuchar1988}] and \cite{IsKu1985}, with notions from elasticity theory (\cite{MaHu1983}), we ``covariantize'' the field theory under consideration by explicitly including diffeomorphisms of $X$ as genuine fields. This yields in particular an intrinsic reformulation of the idea of treating coordinate changes as fields (``parametrization''). 

To accomplish our goal, we: 

\indent 1.  Introduce new fields, the \emph{covariance fields}. These are just diffeomorphisms $\eta: X \to X$ reinterpreted as sections of the product bundle $X^2 \to X$. 

\indent 2. Replace $Y$ and $J^r Y$  by 
${\tilde Y} =  X^2 \times_X Y$ and $J^{k+r} \tilde Y$.
Coordinates on $J^{k+r} \tilde Y$ are\footnote{\ Henceforth we use a tilde to denote objects in a
covariantized theory from those in the original theory when
they  differ. We use bars to notationally distinguish  the factor $\bar X \approx X$ in the fiber of $X^2$ (and quantities defined thereupon) from the base $X$ (and quantities defined thereupon) when confusion may arise. We further use lowercase greek indices for coordinates on the base, and lowercase  latin indices for coordinates along  the fiber; when we refer to the \emph{same} coordinate on $X$ and $\bar X$ we use corresponding letters for indices, e.g. $x^\mu$ and $x^m$.}
$$(x^{\mu},x^a,\ldots x^a{}_{\mu_1\ldots\mu_{k+r}},y^A,\ldots,y^A{}_{\nu_1\ldots\nu_r}).$$

\indent 3. Replace the Lagrangian density $\mathcal L: J^rY \to \Lambda^{n+1}X$ by
$\tilde{\mathcal L}: J^{k+r}\tilde Y \to \Lambda^{n+1}X$ defined by
\begin{equation}
\label{newLag}
\tilde {\mathcal L}\big(j^{k+r}(\eta,\phi)(x)\big) = \eta^*\! \left[\mathcal L\big(j^r(\eta_Y \circ \phi\circ \eta^{-1})(\eta(x))\big)\right].
\end{equation}
In general $\tilde{\mathcal L}$ will depend on only the $r$-jet of $\phi$, but on the $(k+r)$-jet of $\eta$ as $\phi$ has differential index $k$ (whence $\eta_Y$ will depend upon $j^k\eta$). But it may happen that the order of $\mathcal {\tilde L}$ is $ < k+r$.

Suppose for instance $k=0$ and $r=1$. Then in charts we have
\begin{equation}
\tilde L\big(x^{\mu},x^a,x^a{}_\mu,y^A,y^A{}_{\nu}\big) = L\big(x^m, \eta^A, ( \eta^A{}_{B}\ps y^B{}_{\ns \rho} +  \eta^A{}_\rho)\ps x^{\ps \rho}{}_n\big)(\det J),
\label{newL}
\end{equation}
where we abbreviate $\eta^A = y^A\circ \eta_Y$ and the Jacobian of the `coordinate change' $X\to \bar X$ is $J = (x^a{}_\mu)$ with $J^{-1} = (x^\mu{}_a)$. Here we have used the chain rule to compute
\begin{eqnarray*}
y^A{}_{a}\big(& \! \! j^1 \ns\ns &  \ns\ns (\eta_Y  \circ  \phi\circ \eta^{-1})(\bar x)\big)  :=  \frac{\partial (\eta_Y \circ\phi \circ \eta^{-1})^A}{\partial x^a}(\bar x)  \nonumber \\[2ex] & = &
\left(\left[ \frac{\partial(y^A \circ \eta_Y)}{\partial y^B}\big(\phi(\eta^{-1}(\bar x))\big)\right] 
\left[\frac{\partial (y^B \circ \phi)}{\partial x^\mu}( \eta^{-1}(\bar x))\right]
 \right. \\[2ex]
& {} &  \qquad + \left. 
\left[\frac{\partial (y^A\circ \eta_Y)}{\partial x^\mu}\right]\big(\phi(\eta^{-1}(\bar x))\big)\right)
 \left[ \frac{\partial( x^\mu\circ \eta^{-1})}{\partial x^a}(\bar x)\right]
\nonumber
\end{eqnarray*}
or simply
\begin{equation}
y^A{}_{a} =  \big( \eta^A{}_{B}\ps y^B{}_{\mu} + \eta^A{}_\mu\big)x^\mu{}_a
\label{cr}
\end{equation}
for short. Similar formul\ae\ hold for other values of $k$ and $r$.
\begin{rmk} \rm \label{deriv coupling}
Note that because of \eqref{cr} the covariance fields derivatively couple in the Lagrangian density. 
 \hfill $\blacklozenge$  
\end{rmk}


\begin{rmk} \rm \label{elasticity}
This construction is reminiscent of the ``material'' or ``body'' representation in elasticity theory. In this context, $X$ plays the role of the \emph{body manifold} with the $x^\mu$ being \emph{body coordinates}, while $\bar X$ that of the \emph{space manifold} with the $x^a$ being \emph{spatial coordinates}. A diffeomorphism $\eta: X \to \bar X$ is a  \emph{deformation}, and the tangent map $T\eta$ is the \emph{deformation gradient}. For a deformation-field pair $(\eta, \phi): X \to \tilde Y$, the \emph{spatial field multivelocities} 

$$
y^A{}_a\big(j^1( \eta_Y \circ \phi \circ \eta^{-1})\big) :=  \frac{\partial (y^A \circ \eta_Y \circ \phi \circ \eta^{-1})}{\partial x^a}
$$
are related to the \emph{body field multivelocities} 
$$
y^A{}_\mu(j^1\phi) :=  \frac{\partial (y^A\circ \phi)}{\partial x^\mu}
$$
by the chain rule as in \eqref{cr}.
 \hfill $\blacklozenge$  
\end{rmk}
\begin{rmk} \rm \label{comp}
In the computational setting (see \cite{Leok2004}) the two copies of spacetime correspond  to a computational domain $X$ (where the mesh is uniform) and a physical domain $\bar X$ (where the mesh is concentrated about regions of high gradients), and the discrete variational principle allows one to obtain both the discrete diffeomorphism $\eta$ of spacetime (corresponding to the distribution of nodal points in spacetime), and the field values over these points. This is unlike moving mesh methods, where the solution of the field values is separate from the choice of the distribution of the nodal points.
 \hfill $\blacklozenge$  
\end{rmk}

Our first goal is to obtain a field theory which is diffeomorphism covariant. Our construction achieves this, as we now show.

Let $\sigma \in {\rm Diff}(X).$ On sections $\psi$ of $Y$ we define the action
\begin{equation}
\sigma \cdot \psi =  \sigma_{Y} \circ \psi \circ \sigma^{-1}.
\label{action on psi}
\end{equation}
We extend this to an imbedding of $\textup{Diff}(X) \to \textup{Aut}(\tilde Y)$ by requiring that $\textup{Diff}(X)$ act trivially on $\bar X$; this is consistent with viewing $\eta$ as a point mapping of $X$ to $\bar X$. Thus we have
\begin{equation}
\sigma \cdot \eta = \eta \circ \sigma^{-1}.
\label{action on eta}
\end{equation}
\begin{thm}
\label{cov}
The modified field theory on $J^{k+r}{\tilde Y}$ with Lagrangian density \eqref{newLag} is $\textup{Diff}(X)$-covariant. 
\end{thm}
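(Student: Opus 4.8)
The plan is to verify the covariance identity \eqref{equi} directly on prolonged sections. Every point of $J^{k+r}\tilde Y$ is the $(k+r)$-jet of some section $(\eta,\phi)$ at some $x\in X$, and prolongation intertwines the induced $\textup{Diff}(X)$-action on $\tilde Y$ with its action on sections given by \eqref{action on psi}--\eqref{action on eta}. Hence it suffices to fix $\sigma\in\textup{Diff}(X)$ and a section $(\eta,\phi)$ and to show that evaluating $\tilde{\mathcal L}$ on the prolongation of the transformed section $\sigma\cdot(\eta,\phi)=(\eta\circ\sigma^{-1},\,\sigma_Y\circ\phi\circ\sigma^{-1})$ at the shifted base point $\sigma(x)$ reproduces $\sigma_*$ applied to $\tilde{\mathcal L}$ evaluated on $(\eta,\phi)$ at $x$. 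The $(k+r)$-jet order is exactly what is needed for the prolonged action to be well defined, so this reduction is legitimate.

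First I would unwind the definition \eqref{newLag} on the transformed section. Writing $\eta'=\eta\circ\sigma^{-1}$ and $\phi'=\sigma_Y\circ\phi\circ\sigma^{-1}$, the interior argument of $\mathcal L$ is the pushed-forward field $\eta'_Y\circ\phi'\circ(\eta')^{-1}$. The crucial step is to simplify this composition. Using that the embedding $\textup{Diff}(X)\to\operatorname{Aut}(Y)$ is a group homomorphism, so that $(\eta\circ\sigma^{-1})_Y=\eta_Y\circ\sigma_Y^{-1}$, together with $(\eta')^{-1}=\sigma\circ\eta^{-1}$, the factors of $\sigma_Y$ and of $\sigma$ cancel in pairs:
$$\eta'_Y\circ\phi'\circ(\eta')^{-1}=\eta_Y\circ\sigma_Y^{-1}\circ\sigma_Y\circ\phi\circ\sigma^{-1}\circ\sigma\circ\eta^{-1}=\eta_Y\circ\phi\circ\eta^{-1}.$$
Simultaneously the evaluation point collapses, since $\eta'(\sigma(x))=\eta(x)$. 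Therefore the top form on $\bar X$ produced by $\mathcal L$ is literally identical to the one appearing inside $\tilde{\mathcal L}(j^{k+r}(\eta,\phi)(x))$; call it $\Omega$, anchored at $\eta(x)$.

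It then remains to compare the two outer pullbacks. On the transformed side the outer operation is $(\eta')^*=(\eta\circ\sigma^{-1})^*=(\sigma^{-1})^*\circ\eta^*$, whereas the right-hand side of \eqref{equi} applies $\sigma_*$ to $\eta^*\Omega$. Since $(\sigma^{-1})^*=\sigma_*$ as an operation on differential forms, these agree, $(\eta')^*\Omega=(\sigma^{-1})^*(\eta^*\Omega)=\sigma_*(\eta^*\Omega)$, which is precisely \eqref{equi} for $\tilde{\mathcal L}$. A base-point check confirms consistency: $\eta^*\Omega$ sits at $x$, so $\sigma_*(\eta^*\Omega)$ sits at $\sigma(x)$, and indeed $\eta\circ\sigma^{-1}$ carries $\sigma(x)$ to $\eta(x)$, so $(\eta')^*\Omega$ lands at $\sigma(x)$ as required.

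I expect the only real obstacle to be bookkeeping rather than genuine difficulty: keeping straight which maps act on the base $X$ versus the fiber copy $\bar X$, which base point each form is anchored at, and confirming that the prolonged action of $\sigma_{\tilde Y}$ on $\tilde Y$ is faithfully encoded by the section-level action \eqref{action on psi}--\eqref{action on eta} together with the base-point shift $x\mapsto\sigma(x)$. The algebraic content reduces to the two interior cancellations ($\sigma_Y^{-1}\sigma_Y$ and $\sigma^{-1}\sigma$) and the exterior identity $(\sigma^{-1})^*=\sigma_*$; these are forced, and the single structural input they rely on is that $\sigma\mapsto\sigma_Y$ is a homomorphism, which is guaranteed by the embedding being a group embedding.
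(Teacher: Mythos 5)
Your proof is correct and follows essentially the same route as the paper's: both unwind the definition \eqref{newLag} on the transformed section, cancel the interior factors $\sigma_Y^{-1}\sigma_Y$ and $\sigma^{-1}\sigma$ using that $\sigma\mapsto\sigma_Y$ is a group homomorphism, and conclude via $(\eta\circ\sigma^{-1})^*=\sigma_*\circ\eta^*$. You merely make explicit the cancellations and base-point bookkeeping that the paper's chain of equalities leaves implicit.
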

\begin{proof}
Suppose $\sigma \in \textup{Diff}(X)$. Then by construction and \eqref{action on eta}, \eqref{newLag} yields
\begin{eqnarray*}
\tilde{\mathcal L}\ns\left( j^{k+r}\sigma_{\tilde Y} (j^{k+r}(\eta,\phi)(x))\right) 
& = & \tilde{\mathcal L} \big( j^r(\sigma \cdot \eta,\sigma \cdot\phi)(\sigma(x))\big) \\[1 ex]
& = &(\sigma \cdot \eta)^* \left[\ps \mathcal L \ns \left(j^r((\sigma \cdot \eta)_Y \circ(\sigma \cdot \phi) \circ (\sigma \cdot \eta)^{-1})(\eta(x))\right) \right] \\[1 ex]
& = & (\eta \circ \sigma^{-1})^* \left[\ps \mathcal L\ns \left(j^r(\eta_Y \circ \phi\circ \eta^{-1})(\eta(x))\right) \right] \\ [1 ex]
& = &\sigma_*\ns \left[\ps \tilde{\mathcal L}\ns \left(j^{k+r}(\eta,\phi)(x)\right) \right] 
\end{eqnarray*}
as was to be shown.
\end{proof}

It follows that $ {\rm Diff}(X)$ sends solutions of the variational problem defined by $\mathcal{\tilde{L}}$ to solutions. Of course, the same need not be true for $\mathcal L$. 

Our second  goal is to ensure that  the modified field theory is ``essentially equivalent'' to the original one. This is also the case, as we now explain.

We first consider the Euler--Lagrange equations of the new Lagrangian density $\tilde{\mathcal L}$ corresponding to the fields $\phi$. For this, we compute the variation of the action integral for compactly supported vertical variations $(\eta,\phi _\epsilon)$ with $\phi _0 = \phi$. In this case \eqref{newLag} gives
\begin{eqnarray*}
\left. \frac{d}{d\epsilon}\right|_{\epsilon =0} \int _X \tilde{\mathcal{L}}\ns \left(j^{k+r} (\eta , \phi _\epsilon)\right) & = & 
\left. \frac{d}{d\epsilon}\right|_{\epsilon =0} \int _X \eta ^* \mathcal{L}\ns\left(j^r (\eta _Y \circ \phi _\epsilon \circ \eta ^{-1})\right)\\[12pt]
& = &  \left. \frac{d}{d\epsilon}\right|_{\epsilon =0} \int _X \mathcal{L}\ns\left(j^r (\eta _Y \circ \phi _\epsilon \circ \eta ^{-1})\right)\nonumber
\end{eqnarray*}
by the change of variables formula. Since $\eta_Y$ is fiber-preserving it is clear that $\eta _Y\circ \phi _\epsilon \circ \eta ^{-1}$ is a compactly supported vertical variation of $\eta _Y\circ \phi \circ \eta ^{-1}$. In fact, as $\eta _Y$ is an automorphism of $Y$, any compactly supported vertical variation of $\eta _Y\circ \phi \circ \eta ^{-1}$ can be obtained in this way. Hence, the derivative above vanishes if and only if $\eta _Y\circ \phi \circ \eta ^{-1}$ is a critical section of $\mathcal{L}$. 

Now suppose $\phi$ is fixed and consider compactly supported variations $\eta _\epsilon$ of the covariance field $\eta$.\footnote{\ We must vary $\eta$ within Diff$(X)$.} Then, reprising the previous calculation,
\begin{eqnarray*}
\left. \frac{d}{d\epsilon}\right|_{\epsilon =0} \int _X \tilde{\mathcal{L}}\ns\left(j^{k+r} (\eta _\epsilon , \phi )\right) 
& = & \left.\frac{d}{d\epsilon}\right|_{\epsilon =0} \int _X \mathcal{L}\ns\left(j^r ((\eta _\epsilon)_Y  \circ \phi \circ \eta _\epsilon{}^{-1}) \right).\nonumber
\end{eqnarray*}
(Note that $(\eta _\epsilon)_Y  \circ \phi \circ \eta _\epsilon{}^{-1}$ can be thought as a ``horizontal variation'' of the section $\eta _Y  \circ \phi \circ \eta ^{-1}$, cf. \cite{MaPaSh1998}.) To compute this integral,  let $\Theta_{\mathcal L}$ be a Lepagean equivalent of $\mathcal{L}$ on $J^{2r-1}Y$ (see \cite{Go1991}). As the Lagrangian density is the 0-contact part of $\Theta_{\mathcal L}$,
\begin{eqnarray*}
 \mathcal{L}\ns\left(j^r ((\eta _\epsilon)_Y  \circ \phi \circ \eta _\epsilon{}^{-1}) \right)
& = & \left(j^{2r-1} ((\eta _\epsilon)_Y  \circ \phi \circ \eta _\epsilon{}^{-1} )\right)^*\Theta_{ \mathcal L} \\[6pt]
& = & \left(\eta _\epsilon{}^{-1}\right)^*(j^{2r-1}\phi)^*\left(j^{2r-1} (\eta _\epsilon)_Y\right)^*\Theta_{ \mathcal L}. 
\end{eqnarray*}
Applying the change of variables formula, the derivative of the action now reads
\begin{equation*}
\left. \frac{d}{d\epsilon}\right|_{\epsilon =0} \int _X (j^{2r-1}\phi)^*\left(j^{2r-1} (\eta _\epsilon)_Y\right)^*\Theta_{ \mathcal L} 
  =   \int _X (j^{2r-1}\phi)^*(j^{2r-1} \eta_Y)^* \pounds_{j^{2r-1}W}\Theta_{ \mathcal L} 
\end{equation*}
where $\pounds$ denotes the Lie derivative and $W$ is the vector field on $Y$  defined by the flow $(\eta _\epsilon)_Y$. Undoing the change of variables and using Cartan's magic formula, this becomes
\begin{equation*}
\int_X \left(j^{2r-1} (\eta _Y \circ \phi \circ \eta ^{-1})\right)^* i_{j^{2r-1} W}d\Theta _{\mathcal{L}} +\int _{X} \left(j^{2r-1} (\eta _Y \circ \phi \circ \eta ^{-1})\right)^*d\ps i_{j^{2r-1}W}\Theta _{\mathcal{L}}.
\end{equation*}
By Stokes' theorem the second term vanishes for compactly supported $W$. But if $\eta _Y \circ \phi \circ \eta^{-1}$ is a critical section of $\mathcal{L}$ (so that the Euler--Lagrange equations of $\tilde{\mathcal{L}}$ with respect to\! $\phi$ are satisfied) then the first term in the above expression vanishes (see  \cite{KrKrSa2010}).  We have thus proven:
\begin{thm}
\label{mainresult}
Let $(\eta , \phi)$ be a section of $\tilde{Y}$. Then
\begin{itemize}
\item[{\rm (\emph{i})}] $(\eta , \phi)$ satisfies the  Euler--Lagrange equations of $\tilde{\mathcal{L}}$ with respect to $\phi$ if and only if $\eta \cdot \phi $ satisfies the  Euler--Lagrange equations of $\mathcal{L}$, that is,
\[
\frac{\delta \tilde {\mathcal L}}{\delta \phi}(\eta , \phi)=0 \iff \frac{\delta {\mathcal L}}{\delta \phi}(\eta \cdot \phi)=0
\]
\item[{\rm (\emph{ii})}] If $(\eta , \phi)$ satisfies the  Euler--Lagrange equations of $\tilde{\mathcal{L}}$ with respect to $\phi$ then the  Euler--Lagrange equations of $\tilde{\mathcal{L}}$ with respect to the covariance fields $\eta$ are vacuously satisfied for $(\eta , \phi)$, that is,
\[
\frac{\delta \tilde {\mathcal L}}{\delta \phi}(\eta , \phi)=0 \Longrightarrow \frac{\delta {\tilde {\mathcal L}}}{\delta \eta}(\eta , \phi)=0.
\]
\end{itemize}
\end{thm}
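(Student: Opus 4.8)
\section*{Proof proposal}

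The plan is to handle the two assertions separately, in each case isolating the relevant piece of the first variation of the action associated with $\tilde{\mathcal L}$. For part (\emph{i}) I would fix $\eta$ and take compactly supported vertical variations $\phi_\epsilon$ with $\phi_0=\phi$. Substituting the defining formula \eqref{newLag} and absorbing the pullback $\eta^*$ into the domain of integration via the change-of-variables formula, the action for $\tilde{\mathcal L}$ along $(\eta,\phi_\epsilon)$ becomes the action for $\mathcal L$ along $\eta_Y\circ\phi_\epsilon\circ\eta^{-1}$. Because $\eta_Y$ is a fiber-preserving automorphism, the assignment $\phi_\epsilon\mapsto\eta_Y\circ\phi_\epsilon\circ\eta^{-1}$ is a bijection between compactly supported vertical variations of $\phi$ and those of $\eta\cdot\phi=\eta_Y\circ\phi\circ\eta^{-1}$. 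Hence the first variation vanishes for \emph{all} admissible $\phi_\epsilon$ if and only if $\eta\cdot\phi$ is a critical section of $\mathcal L$, which is exactly the biconditional claimed in (\emph{i}).

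For part (\emph{ii}) I would instead fix $\phi$ and vary $\eta$ \emph{within} $\mathrm{Diff}(X)$. The same change of variables rewrites the action along $(\eta_\epsilon,\phi)$ as the $\mathcal L$-action along the horizontal variation $(\eta_\epsilon)_Y\circ\phi\circ\eta_\epsilon^{-1}$. The crucial device is to replace $\mathcal L$ by a Lepagean equivalent $\Theta_{\mathcal L}$ on $J^{2r-1}Y$, whose $0$-contact part is $\mathcal L$, so that the integrand is the pullback of $\Theta_{\mathcal L}$ along the prolonged section. Factoring this pullback as $(\eta_\epsilon^{-1})^*(j^{2r-1}\phi)^*(j^{2r-1}(\eta_\epsilon)_Y)^*\Theta_{\mathcal L}$ and differentiating at $\epsilon=0$ yields a Lie derivative $\pounds_{j^{2r-1}W}\Theta_{\mathcal L}$, where $W$ is the vector field on $Y$ generating the flow $(\eta_\epsilon)_Y$. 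Applying Cartan's magic formula $\pounds_{j^{2r-1}W}=i_{j^{2r-1}W}d+d\,i_{j^{2r-1}W}$ splits the first variation into a bulk term built from $i_{j^{2r-1}W}d\Theta_{\mathcal L}$ and a boundary term $d\,i_{j^{2r-1}W}\Theta_{\mathcal L}$, the latter of which Stokes' theorem annihilates for compactly supported $W$.

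To close (\emph{ii}) I would invoke the intrinsic characterization of critical sections: the pullback of $d\Theta_{\mathcal L}$ along a solution of the Euler--Lagrange equations contracts to zero against any vector field. By hypothesis $(\eta,\phi)$ satisfies the Euler--Lagrange equations with respect to $\phi$, so part (\emph{i}) forces $\eta\cdot\phi$ to be critical for $\mathcal L$; consequently the bulk term vanishes as well, and the entire first variation with respect to $\eta$ is zero. This is the asserted implication, and it explains the word ``vacuously'': no new equations are imposed on $\eta$. I expect the second part to be the main obstacle, since varying $\eta$ is a genuinely horizontal, diffeomorphism-induced variation of $\eta\cdot\phi$ rather than a vertical one, so the ordinary Euler--Lagrange operator does not apply directly; the Lepagean--Cartan bookkeeping is precisely what reorganizes this horizontal variation into a contraction with $d\Theta_{\mathcal L}$. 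The points requiring care are that the variations $\eta_\epsilon$ remain in $\mathrm{Diff}(X)$, that $W$ be compactly supported so the boundary term drops, and that the order bookkeeping (prolongation of $\eta_Y$ on $J^{k+r}\tilde Y$ versus $\Theta_{\mathcal L}$ living on $J^{2r-1}Y$) be consistent.
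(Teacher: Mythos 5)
Your proposal is correct and follows essentially the same route as the paper: part (\emph{i}) via the change-of-variables formula and the observation that $\eta_Y$ induces a bijection on compactly supported vertical variations, and part (\emph{ii}) via a Lepagean equivalent $\Theta_{\mathcal L}$, the factorization of the pullback, Cartan's magic formula, Stokes' theorem, and the intrinsic characterization of critical sections. The points you flag as requiring care (variations staying in $\mathrm{Diff}(X)$, compact support of $W$, order bookkeeping) are exactly the ones the paper also attends to.
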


 If $\mathcal{S}$ and $\mathcal{\tilde S}$ denote the sets of solutions of the Euler--Lagrange equations of
$\mathcal{L}$ and ${\mathcal{\tilde L}}$ respectively, Theorem \ref{mainresult} can be summarized by saying that the map $(\eta,\phi) \mapsto (\eta,\eta \cdot \phi)$ establishes a one-to-one correspondence
\[
\mathcal{\tilde S} \to {\rm Diff}(X) \times \mathcal{S}.
\]

Furthermore, since $\tilde{\mathcal L}$ is  ${\rm Diff}(X)$-covariant, ${\rm Diff}(X)$ acts on $\mathcal{\tilde S}$. We may then use the surjection $(\eta,\phi) \mapsto \eta \cdot \phi$ to identify  $\mathcal{S}$ with the orbit space $\mathcal{\tilde S}/ {\rm Diff}(X)$. Thus while the covariantized theory will have many more solutions than the original system, corresponding to the (gauge) freedom in the choice of covariance fields,  their solutions spaces stand in the simple relation
\[
\mathcal{S} \approx \mathcal{\tilde S}/{\rm Diff}(X).
\]
Because of this we say that the covariantization process leaves solution spaces ``essentially unchanged.'' As the covariance fields have no dynamics, we may additionally say that the original and covariantized theories are ``essentially equivalent.''

\begin{rmk} \rm
Theorem \ref{mainresult} provides, to our knowledge, the first instance in which the essential equivalence of the original and covariantized systems has been explicitly stated and proved.
\hfill $\blacklozenge$  
\end{rmk}

\begin{rmk} \rm
The situation above amounts to a reduction process when viewed in reverse. More precisely, if 
we begin with a ${\rm Diff}(X)$-covariant Lagrangian density $\mathcal{\tilde{L}}$, reduction by the ${\rm Diff}(X)$-action would yield the Lagrangian density $\mathcal{L}$. Indeed, the behavior of the set of solutions
 $\mathcal{S}$ and $\tilde{\mathcal{S}}$ given just above is simply the
relation between the set of solutions of a ${\rm Diff}(X)$-covariant Lagrangian and that of its reduced counterpart.
 \hfill $\blacklozenge$  
\end{rmk}

It is worthwhile explicitly writing out the Euler--Lagrange equations for $\tilde {\mathcal L}$ in the special case when $k=0$ and $r=1$. These equations for the fields $\phi$ unsurprisingly turn out to be reparametrizations of those for $\mathcal L$. (We give an example in \S\ref{NM}.) The Euler--Lagrange equations for the covariance fields are more interesting; we may express them as follows.

First we compute 
\begin{eqnarray}
\frac{\partial \tilde L}{\partial x^a{}_\mu} & = & L \frac{\partial (\det J)}{\partial x^a{}_\mu}  +\frac{\partial L}{\partial x^a{}_\mu}  (\det J)\nonumber \\[6pt]
 \label{second} 
  & = &  L \frac{\partial (x^c{}_{\tilde \nu} C_c{}^{\tilde \nu})}{\partial x^a{}_\mu} + 
  \frac{\partial L}{\partial y^A{}_c}\frac{\partial y^A{}_c}{\partial x^a{}_\mu}  (\det J). 
 \end{eqnarray}
In obtaining the first term here we have employed a cofactor expansion of $\det J$ (in which $C_c{}^{\nu}$ is the cofactor of $x^c{}_\nu$ and hatted indices are not summed). Now
$$\frac{\partial (x^c{}_{\tilde \nu} C_c{}^{\tilde \nu})}{\partial x^a{}_\mu} = C_a{}^{\tilde \nu} \delta^\mu{}_{\tilde \nu} =  C_a{}^{\tilde \nu} (x^c{}_{\tilde\nu} x^\mu{}_c) =  x^\mu{}_a (\det J).$$
In the second term we use \eqref{cr} to evaluate
$$\frac{\partial y^A{}_c}{\partial x^a{}_\mu} = (\eta^A{}_B y^B{}_\nu + \eta^A{}_\mu) \frac{\partial x^\nu{}_c}{\partial x^a{}_\mu} = (\eta^A{}_B y^B{}_\nu + \eta^A{}_\mu) \left(- x^\nu{}_a x^\mu{}_c\right) = - y^A{}_a x^\mu{}_c.
$$
Substituting these results into \eqref{second}, we have
\begin{eqnarray}
\frac{\partial \tilde L}{\partial x^a{}_\mu} & = & \left(L\ps \delta^c{}_a -  \frac{\partial L}{\partial y^A{}_c}y^A{}_a\right)\ns x^\mu{}_c (\det J).\nonumber 
\end{eqnarray}

The Euler--Lagrange equations for the covariance fields may thus be written
\begin{eqnarray}
0 & = & \frac{\partial \tilde L}{\partial x^a} - D_\mu \! \left(\frac{\partial \tilde L}{\partial x^a{}_\mu}\right)
\label{eleta} \\[6pt]
& = & \frac{\partial L(\det J)}{\partial x^a}- x^b{}_\mu D_b\!\left( \left[  L\ps \delta^c{}_a -  \frac{\partial L}{\partial y^A{}_c}y^A{}_a  \right]     \ns    x^\mu{}_c (\det J)    \right) \label{2nd} \\[6pt]
& = & \left( \frac{\partial L}{\partial x^a} +  D_b\! \left[  \frac{\partial L}{\partial y^A{}_b}y^A{}_a - L\ps \delta^b{}_a  \right]     \right)  (\det J)
\label{condition} \\[6pt]
& = &  \left( \frac{\partial L}{\partial x^a} +  D_b\! \left( \frac{\partial L}{\partial y^A{}_b}\right)y^A{}_a + \frac{\partial L}{\partial y^A{}_b}y^A{}_{ab} - D_a L  \right)  (\det J) \nonumber
\end{eqnarray}
In going from \eqref{2nd} to \eqref{condition}, we have used  the ``Piola identity'': $D_\mu (x^\mu{}_c \det J) = x^b{}_\mu D_b(x^\mu{}_c \det J) \equiv 0$. We recast  the first factor in this last expression in the form
$$- \frac{\delta L}{\delta y^A}y^A{}_a +\left( \frac{\partial L}{\partial x^a} +   \frac{\partial L}{\partial y^A}y^A{}_a + \frac{\partial L}{\partial y^A{}_b}y^A{}_{ab} - D_a L\right) $$
The term in parentheses vanishes identically by definition of the total derivative, whence we again obtain   (\emph{ii}) of Theorem \ref{mainresult}.

However, referring to \eqref{condition} we see that the Euler--Lagrange equations with respect to $\eta$ also  take the form
\begin{equation}
\frac{\partial L}{\partial x^a} -  D_b\mathfrak t^b{}_a = 0
\label{t}
\end{equation}
where 
\begin{equation*}
\mathfrak t^c{}_a  = L\ps \delta^c{}_a -  \frac{\partial L}{\partial y^A{}_c}y^A{}_a\end{equation*}
is the canonical stress-energy-momentum (``SEM'') tensor density of $\mathcal L$.
Should $L$ be independent of the coordinates $x^a$ these reduce to conservation of energy-momentum in the spatial representation. 

\begin{rmk}
{\rm By analogy with elasticity theory (see p. 7, \cite{MaHu1983}), we may define the two-point canonical Piola-Kirchhoff tensor density ${\mathfrak p}$ to be the Piola transform of $\mathfrak t$:}
$${\mathfrak p}^\mu{}_a = \mathfrak t^c{}_a x^\mu{}_c (\det J). $$
{\rm Then from \eqref{eleta} we may equally well express \eqref{t} as}
\begin{equation*}
\frac{\partial \tilde L}{\partial x^a} -  D_\mu{\mathfrak p}^\mu{}_a = 0.
\end{equation*}
{\rm When the covariance fields are ignorable, these yield energy-momentum conservation in the material picture.}
 \hfill $\blacklozenge$  
\end{rmk}

Thus we have achieved our objective in the case when all fields are variational: we have taken a non-diffeomorphism covariant system and produced an essentially equivalent theory which \emph{is} ${\rm Diff}(X)$-covariant. We now study some examples.

\subsubsection{Example: Nonrelativistic Mechanics} \label{NM}

Issues of general covariance appear even in the most elementary systems, such as mechanics.
For instance, consider a particle moving in a Riemannian configuration space $(Q,g)$, so that $Y = \mathbb R \times Q$. Supposing that the action of $\textup{Diff}(\mathbb R)$ on $\mathbb R \times Q$ is trivial on the second factor (as is typically the case in mechanics), the usual Lagrangian density
\begin{equation}
{\mathcal L} = \left(\frac12m\ps  g_{AB} \ps \dot q^A \dot q^B  - V({q})\! \right)\!  dt
\label{partld}
\end{equation}
is not time reparametrization covariant.

To repair this we introduce a covariance field $T:\mathbb R \to \mathbb R$.\footnote{\ The introduction of the covariance field $T$ is a venerable technique in mechanics, see \S V.6 in \cite{Lanczos1970}. 
However, our point of view here is ``active,'' whereas the classical approach is ``passive.''}
As the time reparametrization group does not act on the configuration space,   \eqref{newL} simplifies considerably: 
\begin{equation*}
\tilde L(t,T,q^A,\dot T, \dot q^A) = L(T,q^A,\dot q^A \dot T^{-1})\dot T
\end{equation*}
where $\dot T := dT/dt$, and so \eqref{partld} is replaced by
\begin{equation*}
\tilde{\mathcal L} = \left(\frac12m\ps  g_{AB} \ps \dot q^A \dot q^B \dot T^{-2} - V({q})\right)\! \dot T  \ps dt.
\label{pld}
\end{equation*}
It is obvious that $\tilde{\mathcal L}$ is time reparametrization invariant, as guaranteed by Theorem~\ref{cov}.

Next we verify (\emph{i}) of Theorem \ref{mainresult} for first order mechanics  in general. The Euler--Lagrange equations for $\tilde L$  with respect to $q^A(t)$ are
\begin{eqnarray*}
0=\frac{\partial \tilde{L}}{\partial q^A} -\frac{d}{dt}\!\left(\frac{\partial \tilde{L}}{\partial \dot q^A}\right) & = & \frac{\partial L}{\partial q^A}\dot{T} -\frac{d}{dt}\!\left(\frac{\partial L}{\partial \dot q^A}\right).
\end{eqnarray*}
Since $T$ is a diffeomorphism, we may use it as a parameter on $\bar {\mathbb R}$; then this becomes
$$
0= \dot{T}\left(\frac{\partial L}{\partial q^A} -\frac{d}{dT}\left(\frac{\partial L}{\partial (d q^A/dT)}\right)\right).
$$
Since $\dot{T}\neq0$ it follows that the Euler--Lagrange equations of $\tilde{L}$ are a time reparametrization of those of $L$ for $q^A(T)$.

As for the Euler--Lagrange equations of $\tilde{L}$ with respect to $T$, we find that \eqref{t} reduces to 
\begin{equation*}
 \frac{\partial L}{\partial T} = \frac{dE}{dT} ,
\end{equation*}
where  $E = -\mathfrak t^0{}_0$ is the energy of the original system. When the mechanical system is scleronomic, this gives conservation of energy.

\subsubsection{Example: The Klein--Gordon Equation, Version I} \label{KGvI}

We first covariantize the Klein--Gordon theory using the Lagrangian density \eqref{KGL1}. Write the covariance field $\eta: \mathbb R^2 \to \mathbb R^2$ as $\eta(t,x) = (T(t,x),X(t,x))$. Since $\phi$ is a scalar field, the action of ${\rm Diff}(\mathbb R^2)$ on $Y = \mathbb R \times \mathbb R^2$ is trivial. Then according to \eqref{newL} and \eqref{cr}, in passing from $L$ to $\tilde L$ we must replace $\phi_{,t}$ and $\phi_{,x}$ by 
$$\phi_{,t} \eta^t{}_T + \phi_{,x} \eta^x{}_T = \frac{\dot \phi X' - \phi' \dot X}{\dot T X' - T' \dot X}$$
and
$$\phi_{,t} \eta^t{}_X+ \phi_{,x} \eta^x{}_X = \frac{-\dot \phi T' + \phi' \dot T}{\dot T X' - T' \dot X}$$
respectively, where we have availed ourselves of Cramer's Rule and have denoted derivatives w.r.t $t$ by dots and $x$ by primes. (Observe in this regard that the Jacobian $J = \dot T X' - T' \dot X$.) Thus the covariantized La\-grangi\-an  is
\begin{equation*}
\tilde L =  \frac{ \phi'\dot\phi (\dot X T' + \dot T X') - \phi'^2 \dot T \dot X-\dot\phi^2  T'X'}{\dot T X' - T'\dot X} - \frac{1}{2}m^2\phi^2(\dot T X' - T'\dot X).
\end{equation*}
Of course, when $\eta = {\rm id}_{\mathbb R^2}$ this reduces to \eqref{KGL1}. A calculation shows that $\tilde L$ does indeed have the expected transformation properties under ${\rm Diff}(\mathbb R^2)$. 

We will give an alternate derivation of this (rather complicated-looking) Lagrangian in \S \ref{KGvII}.

\subsection{The Case When Absolute Objects Are Present}

Suppose the field variables can be split into two types: $\phi = (\psi \ps ;\chi)$, where the $\psi$ are variational and the $\chi$ are not.\footnote{\ To emphasize this distinction we use a semicolon to separate the variational  from the non-variational fields. } (Think of the latter as being background metrics, or tetrads, or Yang--Mills fields, etc., on $X$.) We correspondingly split $Y$ as $V \times_X B$, where the factor $V$ refers to the variational fields and $B$ to the background ones.

We further suppose that the following two Ans\"atze regarding the background fields hold. The $\chi$: (A1) appear only to zeroth order in $\mathcal L$, and (A2) have differential index at most 1.  These conditions are typically satisfied in practice, and we do not expect violations thereof to cause problems, see for instance \cite{CLGoMa2008}. 

Now assume that the given Lagrangian density $\mathcal L$ is diffeomorphism covariant; our problem is to covariantize the theory by eliminating the absolute objects $\chi $ from the ranks of fields.

We accomplish this in a manner that is similar to the technique of \S 3A: We introduce a copy $\bar X$ of the spacetime $X$ into the fiber of the variational part $V$ of the configuration bundle $Y$ and adjoin diffeomorphisms $\eta: X \to \bar X$ as covariance  fields. Next, we view the $\chi$, which originally were sections of $B \to X$,  as sections of  $B \to \bar X$. Thus these quantities are no longer regarded as fields---and hence are no longer absolute objects in the theory---but considered merely as geometric objects $\bar \chi$ anchored to the fiber of the modified configuration bundle $\tilde V = (X \times \bar X) \times_X V$.  Finally, we modify $\mathcal L$ to get the new Lagrangian density $\tilde{\mathcal L}$ on $J^r\tilde V$ defined by
\begin{equation}
\label{newL2}
\tilde{\mathcal L}\ns\left(j^r(\eta, \psi)\right) = \mathcal L\big(j^r\psi\ps ;\eta^{-1}\cdot \bar\chi\big)
\end{equation}
where we have taken Ansatz (A1) into account. Subject to Ansatz (A2), $\tilde {\mathcal L}$ will then remain $r^{\rm th}$-order.

The first  key observation is that the modified theory is also spacetime diffeomorphism covariant:
\begin{thm}
\label{equi2}
The Lagrangian density $\tilde{\mathcal{L}}$ is $\operatorname{Diff} (X)$-covariant.
\end{thm}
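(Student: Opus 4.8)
The plan is to imitate the proof of Theorem~\ref{cov}, the one genuinely new ingredient being that I must now invoke the standing hypothesis that the original $\mathcal L$ is itself $\operatorname{Diff}(X)$-covariant. As in \S3A, a diffeomorphism $\sigma \in \operatorname{Diff}(X)$ acts on the variational field by the restriction of \eqref{action on psi} to the variational factor, $\sigma\cdot\psi = \sigma_V\circ\psi\circ\sigma^{-1}$, and on the covariance field by \eqref{action on eta}, $\sigma\cdot\eta = \eta\circ\sigma^{-1}$, while acting \emph{trivially} on the fiber $\bar X$; consequently the anchored background object $\bar\chi$ is left untouched by $\sigma$. The conceptual content of the construction is precisely that $\sigma$ can reach the background data only indirectly, through its action on the covariance field $\eta$.

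Writing $\chi_\eta := \eta^{-1}\cdot\bar\chi$ for the genuine background field on $X$ obtained by pulling $\bar\chi$ back along $\eta$, the definition \eqref{newL2} reads $\tilde{\mathcal L}(j^r(\eta,\psi)) = \mathcal L(j^r\psi\,;\chi_\eta)$. First I would apply the prolonged action and then \eqref{newL2} to obtain
\[
\tilde{\mathcal L}\big(j^r\sigma_{\tilde V}(j^r(\eta,\psi)(x))\big) = \tilde{\mathcal L}\big(j^r(\sigma\cdot\eta,\sigma\cdot\psi)(\sigma(x))\big) = \mathcal L\big(j^r(\sigma\cdot\psi)(\sigma(x))\,;\,\big((\sigma\cdot\eta)^{-1}\cdot\bar\chi\big)(\sigma(x))\big).
\]
The crux is then the transport identity
\[
(\sigma\cdot\eta)^{-1}\cdot\bar\chi = \sigma\cdot(\eta^{-1}\cdot\bar\chi) = \sigma\cdot\chi_\eta,
\]
which says that pulling the frozen $\bar\chi$ back along the transformed covariance field $\eta\circ\sigma^{-1}$ reproduces exactly the push-forward of $\chi_\eta$ by $\sigma$. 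Granting this, the transformed pair $(\sigma\cdot\psi\,;\sigma\cdot\chi_\eta)$ is nothing but $\sigma_Y\circ(\psi\,;\chi_\eta)\circ\sigma^{-1}$, the $\operatorname{Diff}(X)$-image of the honest section $(\psi\,;\chi_\eta)$ of $Y=V\times_X B$. I can therefore feed it into the assumed covariance \eqref{equi} of $\mathcal L$ and conclude
\[
\mathcal L\big(j^r\sigma_Y(j^r(\psi\,;\chi_\eta)(x))\big) = \sigma_*\big[\mathcal L(j^r(\psi\,;\chi_\eta)(x))\big] = \sigma_*\big[\tilde{\mathcal L}(j^r(\eta,\psi)(x))\big],
\]
the final equality being \eqref{newL2} once more. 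This is exactly \eqref{equi} for $\tilde{\mathcal L}$, as required.

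The hard part will be justifying the transport identity, which is where the geometric nature of the background fields enters. It rests on two facts: that $\sigma\cdot\eta = \eta\circ\sigma^{-1}$, so that $(\sigma\cdot\eta)^{-1} = \sigma\circ\eta^{-1}$; and that the lift of diffeomorphisms to the bundle $B$ is functorial, $(\sigma\circ\eta^{-1})_B = \sigma_B\circ(\eta^{-1})_B$, which holds for the natural (tensor/connection) bundles allowed by Ans\"atze (A1)--(A2), cf.~\cite{KoMiSl1993}. Unwinding both sides as $\sigma_B\circ(\eta^{-1})_B\circ\bar\chi\circ\eta\circ\sigma^{-1}$ then gives the equality, and a parallel componentwise statement for $V$ justifies identifying $(\sigma\cdot\psi\,;\sigma\cdot\chi_\eta)$ with $\sigma_Y\circ(\psi\,;\chi_\eta)\circ\sigma^{-1}$. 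It is worth emphasizing that the Ans\"atze themselves play no part in this covariance computation; they are needed only to guarantee that $\tilde{\mathcal L}$ is well defined and remains $r^{\rm th}$-order.
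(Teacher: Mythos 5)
Your proposal is correct and follows essentially the same route as the paper: unwind the action via \eqref{newL2}, establish the transport identity $(\sigma\cdot\eta)^{-1}\cdot\bar\chi = \sigma\cdot(\eta^{-1}\cdot\bar\chi)$ from \eqref{action on eta}, and then invoke the assumed $\operatorname{Diff}(X)$-covariance of $\mathcal L$ on the pair $(\sigma\cdot\psi\,;\sigma\cdot\chi_\eta)$. Your extra paragraph making explicit the functoriality of the lift to $B$ (which the paper leaves implicit in the chain of equalities) is a welcome clarification, not a deviation.
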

\begin{proof}
This is a direct consequence of  definition \eqref{newL2}  and the covariance assumption on $\mathcal L$. Indeed, for $\sigma \in {\rm Diff}(X)$,
\begin{eqnarray*}
\tilde{\mathcal L}\big( j^r\sigma_{\tilde V} \cdot j^r(\eta, \psi)\big) 
& = & \tilde{\mathcal L}\ns \left( j^1\ns (\sigma \cdot \eta),j^r\ns (\sigma \cdot\psi)\right) \\[1 ex]
&= &
{\mathcal{L}} \big( j^r\ns( \sigma \cdot \psi)\, ; (\sigma \cdot \eta)^{-1} \cdot \bar\chi\big) \\[1.5ex]
&= & {\mathcal{L}}\ns \left( j^r\ns( \sigma \cdot \psi)\, ; \sigma \cdot (\eta^{-1} \cdot \bar \chi)\right) \\[1.5ex]
&= & \sigma_* \Big(\mathcal{L}\big(j^r \ns \psi\, ; \eta^{-1} \cdot \bar\chi\big)\Big)\\[1.5ex]
&= & \sigma_*\!\left(\tilde {\mathcal{L}} \big(j^r(\eta, \psi) \big)\right)
\end{eqnarray*} 
where we use \eqref{action on psi} and \eqref{action on eta} to compute
$(\sigma \cdot \eta)^{-1} \cdot \bar \chi = (\eta \circ \sigma ^{-1})^{-1} \cdot \bar{\chi} = (\sigma \circ \eta^{-1})\cdot \bar \chi = \sigma \cdot (\eta^{-1} \cdot  \bar\chi).$
\end{proof}

Thus the modified theory is generally covariant. It remains only to verify that the solution spaces for the original and modified theories are essentially the same. The analogue of Theorem \ref{mainresult} in the present context is

\begin{thm}
\label{mainresult2}
Let $(\eta , \psi)$ be a section of $\tilde{V}$ and let $\bar \chi: \bar X \to B$ be given.
\begin{itemize}
\item[{\rm (\emph{i})}] $(\eta , \psi)$ satisfies the  Euler--Lagrange equations of $\tilde{\mathcal{L}}$ with respect to the fields $\psi$ if and only if $(\psi \, ; \eta^{-1} \cdot \bar  \chi)$ satisfies the  Euler--Lagrange equations of $\mathcal{L}$ with respect to $\psi$, that is,
\[
\frac{\delta \tilde {\mathcal L}}{\delta \psi}(\eta , \psi)=0 \iff \frac{\delta {\mathcal L}}{\delta \psi}(\psi \, ; \eta^{-1} \cdot \bar  \chi)=0.
\]
\item[{\rm (\emph{ii})}] If $(\eta , \psi)$ satisfies the  Euler--Lagrange equations of $\tilde{\mathcal{L}}$ with respect to the fields $\psi$ then the  Euler--Lagrange equations of $\tilde{\mathcal{L}}$ with respect to the covariance fields $\eta$ are vacuously satisfied for $(\eta , \psi)$, that is,
\[
\frac{\delta \tilde {\mathcal L}}{\delta \psi}(\eta , \psi)=0 \Longrightarrow \frac{\delta {\tilde {\mathcal L}}}{\delta \eta}(\eta , \psi)=0.
\]
\end{itemize}
\end{thm}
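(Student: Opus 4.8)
The plan is to treat the two parts separately. Part (\emph{i}) is essentially immediate from \eqref{newL2}, while part (\emph{ii}) rests on a Noether identity furnished by the assumed $\mathrm{Diff}(X)$-covariance of $\mathcal L$. For (\emph{i}), I would fix $\eta$, and hence the pulled-back background field $\chi := \eta^{-1}\cdot\bar\chi$, and take a compactly supported vertical variation $\psi_\epsilon$ of $\psi$. In \eqref{newL2} the variational fields $\psi$ enter $\tilde{\mathcal L}$ through $j^r\psi$ in exactly the same way they enter $\mathcal L$ --- only the background slot feels $\eta$ --- so $\int_X\tilde{\mathcal L}(j^r(\eta,\psi_\epsilon)) = \int_X\mathcal L(j^r\psi_\epsilon\,;\chi)$ with $\chi$ held fixed. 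Differentiating at $\epsilon=0$ then shows $\frac{\delta\tilde{\mathcal L}}{\delta\psi}(\eta,\psi) = \frac{\delta\mathcal L}{\delta\psi}(\psi\,;\eta^{-1}\cdot\bar\chi)$, which is precisely the equivalence asserted in (\emph{i}); Ansatz (A1) guarantees that $\chi$ occupies a genuine zeroth-order slot, so this identification is unambiguous.

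For (\emph{ii}), the first step is to compute the effect on $\tilde{\mathcal L}$ of varying $\eta$ with $\psi$ held fixed. Writing a compactly supported variation within $\mathrm{Diff}(X)$ as $\eta_\epsilon = \eta\circ\xi_\epsilon^{-1}$, with $\xi_\epsilon$ the flow of a compactly supported vector field $\xi$ on $X$, the group property of the action together with \eqref{action on psi}--\eqref{action on eta} gives $\eta_\epsilon^{-1}\cdot\bar\chi = \xi_\epsilon\cdot(\eta^{-1}\cdot\bar\chi) = \xi_\epsilon\cdot\chi$. Thus varying the covariance field merely drags the background field $\chi$ along $\xi_\epsilon$, and $\left.\frac{d}{d\epsilon}\right|_{\epsilon=0}\int_X\tilde{\mathcal L}(j^r(\eta_\epsilon,\psi)) = \int_X\frac{\delta\mathcal L}{\delta\chi}\cdot\pounds_\xi\chi$, the boundary terms dropping out by compact support. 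The Euler--Lagrange equations for $\eta$ are precisely the vanishing of this expression for every such $\xi$, and Ansatz (A2) is what keeps $\tilde{\mathcal L}$ of order $r$ and the relevant variational derivatives as stated.

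The crux is then to invoke the covariance of $\mathcal L$ directly. Since \eqref{equi} gives $\mathcal L(\sigma\cdot\psi\,;\sigma\cdot\chi)=\sigma_*\mathcal L(\psi\,;\chi)$, the integral $\int_X\mathcal L(\sigma\cdot\psi\,;\sigma\cdot\chi)$ is independent of $\sigma\in\mathrm{Diff}(X)$ for compactly supported $\sigma$; differentiating along $\xi$ and integrating by parts yields the identity $\int_X\big[\frac{\delta\mathcal L}{\delta\psi}\cdot\pounds_\xi\psi + \frac{\delta\mathcal L}{\delta\chi}\cdot\pounds_\xi\chi\big]=0$ for every compactly supported $\xi$. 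This is an \emph{identity} in the field configuration --- valid off-shell --- because covariance holds for all configurations, not merely solutions. Now if $(\eta,\psi)$ satisfies the $\psi$-equations of $\tilde{\mathcal L}$, part (\emph{i}) gives $\frac{\delta\mathcal L}{\delta\psi}(\psi\,;\chi)=0$, so the first term drops and we are left with $\int_X\frac{\delta\mathcal L}{\delta\chi}\cdot\pounds_\xi\chi=0$ for all $\xi$ --- which is exactly the Euler--Lagrange equation for $\eta$ derived above. Hence the $\eta$-equations hold automatically, i.e.\ vacuously, establishing (\emph{ii}).

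The main obstacle I anticipate is bookkeeping rather than conceptual. One must verify, with consistent sign conventions, that the infinitesimal drag of $\chi$ produced by varying $\eta$ coincides with the $\pounds_\xi\chi$ term appearing in the covariance identity --- this is exactly what the computation $\eta_\epsilon^{-1}\cdot\bar\chi=\xi_\epsilon\cdot\chi$ secures --- and that it is legitimate to set $\frac{\delta\mathcal L}{\delta\psi}=0$ inside an off-shell identity so as to kill the $\psi$-term while leaving the $\chi$-term intact. The remaining care is simply checking that Ans\"atze (A1)--(A2) keep all jet orders and variational derivatives well defined, exactly as in the development leading to Theorem \ref{mainresult}.
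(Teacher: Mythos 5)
Your argument is correct, and part (\emph{i}) coincides with the paper's (which likewise disposes of it as immediate from \eqref{newL2}). For part (\emph{ii}) you reach the same conclusion by a genuinely more self-contained route. The paper parametrizes the variation as $\eta_\epsilon = F_\epsilon\circ\eta$ with $F_\epsilon$ a vertical flow on $X\times\bar X\to X$, changes variables to work on $\bar X$ along the section $\left(j^r(\eta\cdot\psi)\,;\bar\chi\right)$ (which requires the extra observation that $\eta$ is a symmetry, so that $(\eta\cdot\psi\,;\bar\chi)$ is on shell whenever $(\psi\,;\eta^{-1}\cdot\bar\chi)$ is), expands the drag of $\bar\chi$ using the index-$\le 1$ coefficients $C^s{}_b,\ C^{sa}{}_b$, and then invokes the generalized Hilbert formula and Prop.~1 of \cite{GoMa1992} to recognize the integrand as the exact divergence $D_a\!\left(\mathfrak T^a{}_b\ps\xi^b\right)$ of the SEM tensor density contracted with $\xi$, finishing with the Piola identity and the divergence theorem. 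You instead stay on $X$, observe that $\eta_\epsilon{}^{-1}\cdot\bar\chi = \xi_\epsilon\cdot\chi$ via \eqref{action on psi}--\eqref{action on eta}, and derive the integrated off-shell Noether identity directly from \eqref{equi}, so that on the matter shell the $\eta$-variation is manifestly zero. The two middle steps are really the same fact in different clothing: the cited identities $\mathfrak T^a{}_b = -C^{sa}{}_b\,\delta L/\delta\bar\chi^s$ and $D_a\mathfrak T^a{}_b = \left(\bar\chi^s{}_{,b}-C^s{}_b\right)\delta L/\delta\bar\chi^s$ are the local, unintegrated form of your Noether identity with the $\psi$-term already discarded. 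What your version buys is independence from the external reference and fewer changes of variables; what the paper's buys is the explicit identification of the vacuous $\eta$-equations with conservation of the (Piola--Kirchhoff) SEM tensor density, which is thematically central to the rest of the paper. Both arguments lean on (A1) to keep $\delta\mathcal L/\delta\chi$ a zeroth-order object and on (A2) to bound the derivatives of $\xi$ that appear, as you note.
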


\begin{proof}
Part (\emph{i}) follows immediately from the definition of $\tilde{\mathcal L}$.

For (\emph{ii}) we consider a compactly supported variation $\eta_\epsilon = F_\epsilon \circ \eta$ of $\eta$, where $F_\epsilon$ is a vertical flow on $X \times \bar X \to X$. We compute from \eqref{newL2}, ${\rm Diff}(X)$-equivariance, and the change of variables formula that
\begin{eqnarray}
\left. \frac{d}{d\epsilon}\right|_{\epsilon =0} \int _X \tilde{\mathcal{L}}\big(j^r (\eta_\epsilon , \psi )\big) 
& = & 
\int _X  \left. \frac{d}{d\epsilon}\right|_{\epsilon =0}  \mathcal{L}\ns\left(j^r\psi\, ; \eta_{\epsilon}{}^{-1}  \cdot \bar \chi\right) \nonumber \\[12pt]
& = & 
\int _X  \left. \frac{d}{d\epsilon}\right|_{\epsilon =0}  \mathcal{L}\ns\left(j^r\psi \, ; \eta^{-1}\cdot ( F_{-\epsilon}  \cdot \bar \chi)\right) \nonumber \\[12pt]
& = & 
\int _X  \left. \frac{d}{d\epsilon}\right|_{\epsilon =0}  \eta^*\Big[\mathcal{L}\ns\left(j^r(\eta\cdot \psi)\, ;  F_{-\epsilon}  \cdot \bar \chi\right)\Big] \nonumber \\[12pt]
& = & 
\int _{\bar X}  \left. \frac{d}{d\epsilon}\right|_{\epsilon =0} \mathcal{L}\ns\left(j^r(\eta\cdot \psi)\, ;  F_{-\epsilon}  \cdot \bar \chi\right) \nonumber \\[12pt]
& = &  \int _{\bar X} \frac{\partial L}{\partial \bar \chi^s} \frac{d(F_{-\epsilon} \cdot \bar \chi)^s}{d\epsilon} \ps d^{\ps n+1}\ns \bar x\label{zap}
\end{eqnarray}
where the partial derivative here is evaluated along the section $\left(j^r(\eta\cdot \psi)\, ;  \bar \chi\right)$. 

Write the infinitesimal generator of the induced flow $\left(F_{\epsilon}\right)_B$ on $B$ as
$$\xi_B= \xi^b\ps \frac {\partial }{\partial x^b} + \xi^s \ps \frac{\partial }{\partial \bar \chi^s}.$$
Then from \eqref{action on psi} and the expansions
$$(\bar \chi \circ F_\epsilon)^s \approx  \bar \chi^s + \epsilon \bar \chi^s{}_{,b}\ps \xi^b \qquad {\rm and} \qquad ((F_{-\epsilon})_B\circ \bar \chi)^s \approx \bar \chi^s - \epsilon \xi^s$$
to order $\epsilon$, we have that
$$\frac{d(F_{-\epsilon} \cdot \bar \chi)^s}{d\epsilon} = \bar \chi^s{}_{,b}\ps \xi^b - \xi^s.$$
(This expression is the vertical part of the vector field $\xi _B$ with respect to the splitting $T_{\bar \chi(\bar x)} B = V_{\bar \chi(\bar x)} B \oplus {\rm im}(T_{\bar x}\bar \chi)$ induced by $j^1 _{\bar x} \bar \chi$.)
As $\bar \chi$ has differential index $\leq 1$, we may write
$$\xi^s = C^s{}_b\ps \xi^b + C^{sa}{}_b\ps \xi^b{}_{,a}$$
for some functions $C$ on $B$ (see \cite{GoMa1992} for specifics). Plugging into \eqref{zap}, we end up with
\begin{eqnarray*}
\left. \frac{d}{d\epsilon}\right|_{\epsilon =0} \int _{X} \tilde{\mathcal{L}}\ns \left(j^r (\eta_\epsilon , \psi )\right) 
& = & \int_{\bar X}  \frac{\partial L}{\partial \bar \chi^s}  \Big( \left[ \bar \chi^s{}_{,b} - C^s{}_b\right] \xi^b - C^{sa}{}_b\ps \xi^b{}_{,a}  \Big)d^{\ps n+1}\ns \bar x\nonumber\\[6pt]
& = & \int_{\bar X}  \frac{\delta L}{\delta \bar \chi^s}  \Big( \left[ \bar \chi^s{}_{,b} - C^s{}_b\right] \xi^b - C^{sa}{}_b\xi^b{}_{,a}  \Big)d^{\ps n+1}\ns \bar x
\end{eqnarray*}
where we have used (A1) in going from the first line to the second.

Now suppose the $\psi$ are on shell, so that by part (\emph{i}) the pair $(\psi \, ;\eta^{-1}\cdot \bar \chi)$ satisfies the Euler--Lagrange equations of $\mathcal L$ with respect to $\psi$. We observe that since $\mathcal L$ is ${\rm Diff}(X)$-covariant,  the pair $(\eta \cdot \psi\, ; \bar \chi)$ also satisfies the Euler--Lagrange equations of $\mathcal L$ with respect to $\psi$ (i.e., $\eta$ is a symmetry). Bearing this in mind, let $\mathfrak T$ be the SEM tensor density of $\mathcal L$; then the generalized Hilbert formula (3.3) in \cite{GoMa1992} and Prop. 1 \emph{ibid} give 
\begin{equation*}
\mathfrak T^a{}_b = -C^{sa}{}_b \frac{\delta L}{\delta \bar \chi^s} \qquad {\rm and} \qquad 
D_a \mathfrak T^a{}_b = \left( \bar \chi^s{}_{,b} - C^s_{b}\right) \frac{\delta L}{\delta \bar \chi^s}
\end{equation*}
along the section $\left(j^r(\eta\cdot \psi)\, ;  \bar \chi\right)$, respectively. Thus the r.h.s.  of the expression above reduces to
\begin{eqnarray*}\int_{\bar X} \Big( (D_a\mathfrak T^a{}_b)\xi^b + \mathfrak T^a{}_b\ps \xi^b{}_{,a}\Big)\ps d^{\ps n+1}\ns \bar x
& = & \int_{\bar X} D_a\! \left(\mathfrak T^a{}_b \ps \xi^b\right)\ps d^{\ps n+1}\ns \bar x \\[8pt]
& = & \int_{X} D_a\! \left(\mathfrak T^a{}_b\ps \xi^b\right)(\det J)\, d^{\ps n+1}\ns x
\end{eqnarray*}
where we have again used the change of variables formula. 

But now a computation along with the Piola identity yields
$$D_a(\mathfrak T^a{}_b\ps \xi^b)(\det J) = D_\mu(\mathfrak P^\mu{}_b \ps \xi^b)$$
where $\mathfrak P^\mu{}_b = x^\mu{}_a \mathfrak T^a{}_b (\det J)$ is the Piola-Kirchhoff SEM tensor density of $\mathcal L$. Thus
$$\left. \frac{d}{d\epsilon}\right|_{\epsilon =0} \int _X \tilde{\mathcal{L}}\ns \left(j^r (\eta_\epsilon , \psi )\right) 
= \int_X  D_\mu(\mathfrak P^\mu{}_b\ps \xi^b)\ps d^{\ps n+1}x$$
which vanishes by virtue of the divergence theorem, since by assumption $\xi^b(x,\bar x)$ is compactly supported as a function of $x \in X$.
\end{proof}

Thus for given $\bar \chi$ the map $(\eta,\psi) \mapsto \big(\eta,(\psi \, ; \eta^{-1}\cdot \bar \chi)\big)$ provides an isomorphism of $\tilde{\mathcal S}$ with ${\rm Diff}(X) \times \mathcal S$.


\subsubsection{Example: The Klein--Gordon Equation, Version II} \label{KGvII}


We covariantize the second version of the Klein--Gordon Lagrangian density \eqref{KGL2}. View the metric as being anchored to the fiber $\bar X$, so that $g^{\mu\nu} =  x^\mu{}_a x^\nu{}_b \ps \bar g^{ab}$, and take
$$\tilde{L} = \left(x^\mu{}_a x^\nu{}_b \ps \bar g^{ab}  \phi_{,\mu}\phi_{,\nu} - \frac{1}{2}m^2\phi^2\right)(\det J).$$

Choosing coordinates on $\bar X$ in which 
$$\bar g = 
\left(
\begin{array}{ccc}
 0 &   1   \\
  1 &   0
\end{array}
\right)
$$
and again writing $\eta = (T,X)$, one finds that $\tilde L$ reduces to the Lagrangian of \S \ref{KGvI}.

The case when a metric on spacetime is the only background field was exhaustively analyzed in \cite{CLGoMa2008}. There, it was shown that the vanishing of the Euler--Lagrange expressions for the covariance fields could be viewed as a consequence of the Bianchi identity $\nabla_{\!\mu}\mathfrak T^\mu{}_\nu = 0$ for the Hilbert SEM tensor density $\mathfrak T^{\mu\nu} = -2\, \delta \mathcal L /\delta g_{\mu\nu}$.


\section{Enlarging Vertical Automorphism Groups}

Now we consider the case when a group $\mathcal{G}$ acts \emph{vertically} on the configuration bundle $Y$, that is, the elements of $\mathcal{G}$ are $\pi_{XY}$-bundle automorphisms $Y \to Y$ covering the identity on $X$. It can happen that $\mathcal G$ is not an invariance group of  the Lagrangian density $\mathcal L$ even though, for instance,  (\emph{i}) $\mathcal G$ acts by symmetries,  or (\emph{ii}) some subgroup  is an invariance group. Thus it may be desirable to find another essentially equivalent field theory for which  $\mathcal G$ is an invariance group.


\subsection{The Construction}\label{secf}

As with diffeomorphism covariance, we will define a new bundle $\tilde{Y}\to X$ and a new Lagrangian density $\tilde{\mathcal{L}}$ admitting $\mathcal{G}$ as an invariance group. For this we make the additional assumption that $\mathcal{G}$ can be identified with the set of sections of certain bundle $E \to X$, so that if $\eta \in \mathcal G$ then $\eta(x) \in E_x \subset {\rm Diff}(Y_x)$. (This is the case, in particular, in gauge theory; see \S \ref{utiyama}. In other situations, one may wish to identify $\mathcal G$ with some \emph{subset} of sections of $E$.) Mimicking the constructions in the previous section we achieve this by setting\footnote{ \  Analogous to \S3, we here take $k$ to be the order of the highest derivative of $\eta  \in \mathcal G$ that appears in $\eta_Y\circ\phi$, where  $\eta$ is considered a a section $X\to E$. }
\[
\tilde{Y}=E\times _X Y
\]
and
\begin{equation}
\label{vertL}
\tilde{\mathcal L}\big(j^{k+r}(\eta,\phi)(x)\big)={\mathcal L}\big(j^r(\eta_Y \circ \phi)(x)\big).
\end{equation}
Thus sections $\eta$ of $E$ will play the role of covariance fields in the present context. 
 
 Now the group $\mathcal{G}$ acts on the new bundle in a natural way by
\begin{equation}
\sigma _{\tilde{Y}}\big((\eta(x),\phi(x)) \big) = \big(\eta(x) \circ \sigma(x)^{-1},\sigma _Y(\phi(x)) \big)
\label{vertaction}
\end{equation}
for any $(\eta(x),\phi(x))\in \tilde{Y}$ and section $\sigma$ of $E\to X$ (compare \eqref{action on eta}). With respect to this action we immediately verify, in the manner of Theorem \ref{cov}, that the modified Lagrangian density is invariant under this $\mathcal{G}$-action:  for all $\sigma \in \mathcal{G}$ and $\gamma \in J^{k+r} \tilde Y$, 
\begin{equation*}
\mathcal{\tilde{L}}\ns \left( j^{k+r}\ns \sigma_{\tilde Y}(\gamma) \right) = 
\mathcal{\tilde{L}}(\gamma).
\end{equation*}
Note that, in contrast to \eqref{equi}, there is no push-forward on the r.h.s. as the group action is vertical. (Thus we speak of \emph{in}variance in this context, rather than \emph{co}variance.)



Following the now familiar pattern, we find that the Euler--Lagrange 
equations of $\tilde{\mathcal{L}}$ with respect to the fields $\phi$ are essentially the same as those of the original 
Lagrangian $\mathcal{L}$. With respect to the new variables $\eta$, no more conditions are added, that is, they
do not contribute to the dynamics of the variational problem. In fact, Theorem \ref{mainresult} holds word-for-word in this context, and the proof is also similar, but simpler, as $\eta_X = {\rm id}_X$.

\subsection{Examples}

We provide several illustrations of our technique, beginning with an example of circumstance (\emph{i}) above.


\subsubsection{Abelian Chern--Simons Theory}

Treating the connection $A$ as an ordinary 1-form on a 3-manifold $X$, the Lagrangian density is
\begin{equation*}
\mathcal L(j^1\ns A) = dA \wedge A.
\end{equation*}

The additive group $C^\infty(X)$, thought of as sections $f$  of $X \times \mathbb R \to X$, acts on $T^*X$  according to
\begin{equation}
f_{T^*\ns X}(A_x)= A_x + df(x).
\label{cs}
\end{equation}
for $A_x \in T_x ^* X$. 
The Lagrangian density is not  invariant under this action; nonetheless, \eqref{cs} is a symmetry: if $A$ is a solution (i.e., $dA = 0$), any $f _{T^*X}\cdot A=
A + df$ will be  as well. So it is natural to search for an equivalent field theory which is $C^\infty(X)$-invariant.

Our constructions produce the modified configuration bundle 
$$\tilde Y = (X \times \mathbb R) \times_X T^*X$$ 
and the (first order) Lagrangian density
$$\tilde {\mathcal L}\big(j^1\ns (\eta,A)\big) = dA \wedge (A+d\eta).
$$
As expected, this is invariant under the action \eqref{vertaction} which here takes the form
\begin{equation}
\label{emgauge}
f_{\tilde Y}(\eta,A)= (\eta - f,A + df).
\end{equation}
The Euler--Lagrange equation for $A$ is now
$d(2A + d\eta)=0  \iff dA=0$ and that for $\eta$ is vacuously satisfied,
consistent with our results and the fact that $C^\infty(X)$ is a symmetry group of the original system. 

\begin{rmk} \rm \label{nontriv}
All this can be generalized to the case  when $A$ is regarded as a connection on a trivial principal 
 $G$-bundle over $X$, with $G$ not necessarily Abelian. The Lagrangian density becomes
\[
L= {\rm tr}\ns \left(\ns A\wedge\Omega^{A}-\frac{1}{6}A\wedge\lbrack A,A]\ns \right)
\]
where $\Omega^{A}=dA+[A,A]$ is the curvature.  For non-trivial bundles, a variational theory can be given in terms of the bundle of connections as the configuration bundle $Y$ and the
adjoint bundle ${\rm Ad}(G)$ as the bundle $E$ whose sections are symmetries. See \S\ref{utiyama} following, and \cite{CM} and \cite{T} for more details. For higher dimensional manifolds $X$, the situation is more complex. In any case, good references are \cite{F1}
and \cite{F2}. 
 \hfill $\blacklozenge$  
\end{rmk}
 
\subsubsection{The St\"uckelberg Trick}

We consider the Proca field $A$, a 1-form on 4-dimensional spacetime $(X,g)$, with Lagrangian density $\mathcal{L}:J^1(T^*X)\to \wedge ^4 X$ being
\begin{equation}
\label{procaL}
\mathcal{L}(j^1\! A) = \frac12\big(-\|dA\|^2 + m^2\|A\|^2\big) \ps \sqrt {- \det g} \, d^{\ps 4}x.
\end{equation}
Here the norms are taken with respect to the metric $g$. The Proca equations are $\star \: d\star dA = m^2 A$ where $\star$ denotes the Hodge star operator of $g$.

Thinking of the Proca field as being electromagnetism but with a massive photon, we see that the mass term in $\mathcal{L}$ breaks the electromagnetic gauge invariance $A \mapsto A + df$. We may restore this invariance via our construction, even though these electromagnetic shifts are not even symmetries of the Proca equations.

Formally, everything works as in the Chern--Simons example. We have
$$\tilde {\mathcal L}\big(j^1\ns (\eta,A)\big) = \frac12\Big(-\|dA\|^2 + m^2\|A+ d\eta\|^2\Big) \ps \sqrt {- \det g} \, d^{\ps 4}x $$
which is obviously invariant under \eqref{emgauge}.  This is the ``St\"uckelberg Lagrangian'' and the passage from the Proca Lagrangian to it is known as the ``St\"uckelberg trick'' \cite{Stuckelberg1957}. The Euler--Lagrange equations for $A$ are now 
$$\star \: d\star dA = m^2 (A+d\eta)$$ 
and, using these, we find that the field equation $d\star(A+d\eta) = 0$ for the ``St\"uckelberg scalar'' $\eta$ is identically satisfied, as expected.  

Although the St\"uckelberg trick seems innocuous enough, the corresponding field theories are structurally quite different: the St\"uckelberg theory is purely first class (in the sense of Dirac; see \cite{GoMa2010}) while the Proca theory is purely second class. (Indeed, the historical importance of the St\"uckelberg trick was to effect this change, cf. \cite{BBG1995}.) Despite this, the two theories are essentially equivalent as defined above.

\subsubsection{The Minimal Coupling (or Utiyama) Construction}\label{utiyama}

\noindent The main instance of circumstance (\emph{ii}) above is gauge theory. In this context, we have  a field theory invariant under the action of a finite-dimensional Lie group $G$ which we wish to `gauge,' as we now explain.

 Let $P \to X$ be a principal $G$-bundle. \emph{Gauge transformations} are vertical diffeomorphisms $\Phi : P \to P$ such that $R_g \circ \Phi = \Phi \circ R_g$ for any $g \in G$, where $R_g$ stands for the right action of $G$ on $P$. These transformations can be seen as sections of the \emph{Adjoint bundle} ${\rm Ad}({G}) \to X$; this is the bundle associated to $P$ with respect to the conjugate action of $G$ onto itself. More precisely, ${\rm Ad}({G})=(P\times G)/G$ where the action is $(p,h)\cdot g = \big(R_g (p) , g^{-1}hg\big)$, for any $(p,h)\in P \times G$, $g \in G$. 
Let $(p,g)_G$ stand for the class of $(p,g)$ in ${\rm Ad}({G})$. Given a section $\eta$ of ${\rm Ad}({G})$, the mapping $\Phi_\eta : P \to P$ defined as
\[
\Phi_\eta (p) = R_g (p), \quad \text{where} \quad \eta(\pi _{XP}(p))=(p,g)_G
\]
is readily seen to be equivariant and vertical, that is, a gauge transformation. Moreover, any gauge transformation can be identified with a section of the Adjoint bundle via the equation above. We denote the collection of all sections of ${\rm Ad}({G}) \to X$ by $\mathcal G$; it is the \emph{gauge group}.

Let $V$ be a vector space on which $G$ acts linearly and $Y = (P \times V)/G$ be the associated vector bundle. Then $G$ acts $Y$ on the left by vertical bundle automorphisms according to 
$$g\cdot (p,v)_G=(p,g\cdot v)_G = (R_{g^{-1}} p,v)_G.$$
Consequently, $\mathcal G$ also acts on $Y$ according to 
$\eta\cdot (p,v)_G=(\Phi_\eta(p),v)_G$. Thus we have \emph{gauged} the action of $G$ on $Y$: we have extended the `global' (i.e., base-independent) action of the (finite-dimensional) group $G$ by that of the `local' (i.e., base-dependent) action of the (generally infinite-dimensional) group $\mathcal G$.


The situation encountered in gauge theory is when the associated  vector bundle $Y \to X$ is the configuration bundle of some field theory with Lagrangian density $\mathcal L$. Typically $\mathcal L$ is $G$-invariant, but \emph{not} $\mathcal G$-invariant. However, one requires invariance under $\mathcal G$ for physical reasons; we now apply our prescription to achieve this. Thus we take 
$$\tilde Y = {\rm Ad}(G) \times_X Y$$
and define the modified Lagrangian density $\tilde{\mathcal L}$ by \eqref{vertL}.

We refer to a local trivialization $U \times G$ of ${\rm Ad}(G) \to X$ to obtain  a clearer understanding of  this construction when $r =1, \ k=0$. Using matrix notation and the Leibniz rule, from \eqref{vertL} we obtain the (first order) Lagrangian 
\begin{eqnarray*}
{\tilde{L}}\big(x^\mu ,  \eta ^A{} _B , \eta^A{} _{B\mu}, y ^A , y ^A {}_ \mu \big)  & = & 
{L}\big(x^\mu , \eta^A{} _B y ^B , \eta^A{} _B y ^B{} _\mu + \eta^A{} _{B\mu} y ^B\big)\\[4pt]
& = &{L}\big(x^\mu , \eta^A{} _B y ^B , \eta^A{} _B \big[y ^B{} _\mu + (\eta^{-1})^B{}_D\eta^D{} _{C\mu} y ^C\big]\big)
\end{eqnarray*}
where we have isolated a factor of $\eta^A{}_B$ in the last argument. Now by assumption, ${L}$ is $G$-invariant:
\begin{equation*}
{L}\big(x^\mu , \eta ^A{} _B y ^ B , \eta ^A{} _B y ^B {}_\mu\big) ={L}\big(x^\mu , y ^A , y ^A{} _\mu\big).
\end{equation*}
Applying this to the r.h.s. of the preceding equation gives
\begin{eqnarray}
\label{GTL}
{\tilde{L}}\big(x^\mu , \eta ^A{} _B , \eta^A{} _{B \mu }, y ^A , y ^A{} _ \mu \big) & = & 
{L}\big(x^\mu , y ^A , y ^A{} _\mu + (\eta^{-1})^A {}_D \eta^D{} _{C\mu} y ^C\big).
\end{eqnarray}
Note that  $(\eta^{-1})^A{} _{D } \eta^D{} _{C\mu} $ are the components of the matrix $\eta^{-1}\ps d\eta$, and as $d\eta$ is pointwise an element of the bundle $T^* U \otimes TG$, this expression is nothing but the identification of $d\eta$ as belonging to $T^* U \otimes \mathfrak{g}$ corresponding to the identification $TG = G \times \mathfrak g$ given by $(g,\dot g) \mapsto (g,TL_{g^{-1}}\cdot \dot g)$, where $\mathfrak g$ is the Lie algebra of $G$. If we write $A$ for the variables of $T^*U \otimes \mathfrak{g}$, we have finally
\[
{\tilde{L}}\big(x^\mu , \eta ^A{} _B , \eta^A{} _{B \mu }, y ^A , y ^A{} _ \mu \big) = 
{L}\big(x^\mu , y ^A , y ^A {}_\mu + A_{\mu B}^A y ^B \big).\]

On the other hand, the first jet bundle of $U \times G \to U$ is $T^*U \times TG \to U \times G \to U$.
Thus with the identification above we can write
\[
J^1(U \times G) = G \times (T^* U  \otimes \mathfrak{g}).
\]
At this point we observe  that even though the modified Lagrangian density $\mathcal{\tilde{L}}$ was initially defined (with respect to the given local trivialization) on the jet space
\[J^1 \tilde{Y}|\ps U = J^1 Y|\ps U \times (T^*U \otimes TG) = J^1 Y|\ps U \times G \times (T^*U \otimes \mathfrak{g}),
\]
it has no explicit dependence on $\eta$. We can thus drop the  factor of $G$ in the fiber on which $\mathcal{\tilde{L}}$ is defined and simply view
\begin{equation}
\label{jui}
\mathcal{\tilde{L}} : J^1 Y|U \times (T^*U \otimes \mathfrak{g}) \to \wedge ^{n+1} U.
\end{equation}

We now notice that:
\begin{enumerate}
\item The bundle $T^* U \otimes \mathfrak{g}$ is the bundle of connections of the principal bundle 
$P\ps | \ps U=U \times G$ of which $Y |\ps U = U \times V$ is an associated vector bundle.

\item The expression $\nabla_{\ns \mu}\phi^A = \phi ^A {}_{,\mu} + (\eta^{-1})^A {}_D \eta^D{} _{C,\mu} \phi ^C$ is the covariant derivative of the field $\phi$ 
with respect to the connection $A= \eta^{-1} \ps d\eta$.

\item From \eqref{GTL} the variational equations $\delta \tilde{L}/\delta \phi^B\ps (\eta,\phi) = 0 $ can be written 
\begin{equation}
\label{delEL}
\frac{\partial L}{\partial \phi^B} - {\nabla}_{\ns \mu} \!\left(\frac{\partial L}{\nabla_{\ns \mu} \phi^B} \right)= 0.
\end{equation}
\end{enumerate}
This is the well-known \emph{minimal coupling} (or \emph{Utiyama}) trick. Our prescription thus transforms the $G$-invariant Lagrangian density $\mathcal{L}$ into a $\mathcal G$-invariant Lagrangian density $\tilde{\mathcal{L}}$ by changing partial derivatives into covariant derivatives with respect to a connection. This connection $A$ (and \emph{not} the field $\eta$) is the new field of the modified Lagrangian according to the classical gauge-theoretic prescription, while \eqref{delEL} are recognized as the Euler--Lagrange equations (for the matter fields) in gauge theory.

Some remarks are in order:
\begin{rmk} \rm
We have seen in \S \ref{secf} that the Euler--Lagrange equations of  $\mathcal{\tilde{L}}$ for the 
fields $(\eta, \phi)$ are equivalent to the Euler--Lagrange equations of $\mathcal{L}$ for $\eta\cdot \phi$. This is of course still true. But if instead we consider the fields on which the modified Lagrangian density depends to be $A$ and $\phi$, this result no longer holds. Indeed, the variational equation with respect to $A$ forces
\[
0 = \frac{\partial{\tilde{L}}}{\partial A_\mu} = \frac{\partial L}{\partial \phi^A{}_{,\mu}}\ps\phi^A
\]
which is in general inconsistent with the dynamics of the original variational problem (compare \eqref{EMSEM}). This is why the minimal coupling trick requires the addition of a source term depending on the  connection $A$ and its first derivatives to the Lagrangian density. This term must be gauge-invariant; while the Yang--Mills Lagrangian density is the usual choice, a result of \cite{Utiyama1956} (see also \cite{GP1977}) provides a characterization of all such Lagrangians in term of the curvature  of the connection.

On the other hand, it is amusing to wonder what would happen if we treated $A$ as simply a background field, and then generally covariantized the theory as in \S 3B? `Second covariantization'?
 \hfill $\blacklozenge$  
\end{rmk}

\begin{rmk} \rm
Connections $A$ of the type $\eta^{-1}\ps d\eta$ are flat. Thus the prescription we provide here is 
the minimal coupling trick where the new field (the connection) has no curvature. As the curvature is understood as the field strength of $A$, this explains physically why for these connections, the Euler--Lagrange equations of the covariantized theory will be equivalent to those of the original theory:  a gauge transformation can be used to reduce $A=0$.
 \hfill $\blacklozenge$  
\end{rmk}

\begin{rmk} \rm
 The condition for $A$ (as a section of $T^* U \otimes \mathfrak{g} \to U$) to be flat is equivalent to the covariance field $\eta$ being a holonomic section of  $J^1 {\rm Ad}(G)|\ps U  = T^* U 
\otimes TG = G \times (T^* U \otimes \mathfrak{g})$. Hence we could say that the classical minimal coupling trick just takes the construction of \S \ref{secf} and assumes that the connection $A$ need not be holonomic. This can be seen, somehow, as natural because the new Lagrangian density $\mathcal{\tilde{L}}$ depends only on $A$, not on $\eta$, and hence we could `forget' that $A$ is  induced by the holonomic section $\eta$.
 \hfill $\blacklozenge$  
\end{rmk}

\section{The General Case}

Finally, suppose the group of $\pi_{XY}$-bundle automorphisms under consideration acts neither `horizontally' (as in \S 2) nor purely vertically on $Y$  (as in \S 3). For instance, consider a field theory based on a trivial principal bundle $X \times G \to X$, which is not generally covariant  
and which we wish to gauge. Since ${\rm Diff}(X)$ acts on $X \times G$ and both it and the gauge group $ C^\infty(X,G)$ can be realized as sections of bundles over $X$, we may straightforwardly extend our technique to this case, regarding the group $ \mathcal G = {\rm Diff}(X) \ltimes C^\infty(X,G)$ as sections of $X \times X \times G \to X$, and then covariantizing with respect to $\mathcal G$.

The case when the relevant group $\mathcal G$  cannot be realized as sections of some bundle over $X$ is more complicated; nonetheless, our constructions still work with the following adjustments. As in  \cite{CLGoMa2008b} we ``concatenate'' $Y\times Y \to Y$ and $Y \to X$ to obtain the modified configuration bundle $Y \times Y \times_X Y \to Y \times_X X = Y$. Sections of this bundle will consist of pairs $(\eta,\tilde \phi)$, where $\eta \in \mathcal G$ is a $\pi_{XY}$-bundle automorphism and the ``fields'' $\tilde \phi : Y \to Y$ are related to the original fields by $\tilde \phi = \phi \circ \pi_{XY}$.  The automorphisms $\eta: Y \to Y$ are the covariance fields in this approach.

As the modified Lagrangian density on $J^{r+k}(Y \times Y\times_X Y)$ we take
\begin{equation*}
\tilde{\mathcal L}\big(j^{r+k}(\eta,\tilde \phi)(y)\big) = \eta_{X}{}^*\big[\mathcal L\big(j^r(\eta\cdot \phi)(\eta_X(x))\big)\big]\ps \delta^N\!\big(y-\phi(x)\big)
\end{equation*}
where $x = \pi_{XY}(y)$ and $N$ is the fiber dimension of $Y\to X$. As the delta function is a scalar density of weight 1 along the fibers of  $Y\to X$, and $\mathcal L$ is is a scalar density of weight 1 on $X$, $\tilde{\mathcal L}$ is a scalar density of weight 1 on $Y$. These observations coupled with a computation along the lines of that contained in the proof of Theorem \ref{cov} shows that  $\tilde{\mathcal L}$ is $\mathcal G$-covariant.

Finally, Theorem \ref{mainresult} remains valid as stated, the proof needing only minor modifications.


\paragraph{\Large Acknowledgments.}  We would like to thank Jair Koiller, Melvin Leok, Jerry Marsden, Peter Michor, Joris Vankerschaver and Hiroaki Yoshimura for their valuable insights.

\end{document}